\def\ps@headings{%
\def\@oddhead{\mbox{}\scriptsize\rightmark \hfil \thepage}%
\def\@evenhead{\scriptsize\thepage \hfil \leftmark\mbox{}}%
\def\@oddfoot{}%
\def\@evenfoot{}}
\newcommand{\F}{\mathbf{F}}
\newcommand{\N}{\mathcal{N}}
\newcommand{\G}{\mathcal{G}}
\newtheorem{theorem}{\textbf{Theorem}}
\newtheorem{lemma}[theorem]{\textbf{Lemma}}
\newtheorem{definition}[theorem]{\textbf{Definition}}
\newtheorem{example}[theorem]{Example}
\newcommand{\nix}[1]{}
\begin{document}
%\title{On Coding-Based Optical Network Protection \\Using Distributed Fair %Capacity\\ 04-11-2008}
\title{\textbf{\emph{SNEED}}: Enhancing  Network Security Services Using  Network Coding and Joint Capacity }
\author{Salah A. Aly~~~~~~~~~Nirwan Ansari~~~~~~~~~~~~ H. Vincent Poor
}

\maketitle
\begin{abstract}
Traditional network security protocols depend mainly on developing cryptographic schemes and on using biometric methods. These have led to several network security protocols that are unbreakable based on difficulty of solving  untractable mathematical problems such as factoring large integers.

In this paper,   \emph{{\underline{S}ecurity of  \underline{N}etworks  \underline{E}mploying \underline{E}ncoding and  \underline{D}ecoding}} (\textbf{SNEED})  is developed to mitigate  single and multiple link attacks. Network coding and shared capacity  among the working paths are used to provide   data protection and  data integrity against network attackers and eavesdroppers.
 \textbf{SNEED} can be incorporated into various applications in  on-demand TV, satellite communications and multimedia security.  Finally, It is shown that \textbf{SNEED} can be implemented easily where there are $k$ edge disjoint paths between two core nodes (routers or switches) in an enterprize network.
\end{abstract}
%\begin{keywords}
%\end{keywords}
%\vspace*{0.2in} \vspace*{-\baselineskip}

\section{Introduction}\label{sec:intro}

 Internet Service Providers (ISP) and  Internet Traffic Engineering (ITE) aim to provide fast, reliable, quality of demands, and differentiated services for demanding users. Such services can be deployed at the IP, physical, and application layers.  Several security schemes and network protection strategies have been proposed during the last two decades to protect operational networks against link failures, node attacks, increased overhead and congestion.  The  goal in this paper is to provide novel strategies for network security services against  attacks and eavesdroppers by deploying network coding and shared capacity. The strategies can be deployed to protect network traffics between core nodes such as routers or switches.

Protection of communication networks against network
attacks and failures are  essential to increase robustness, reliability, and availability of the transmitted data.
  The attacks  may also occur at various network layers, including the physical, IP, or application
layers~\cite{stallings06,katz08,menezens01,schneier96}.  Also,  network attacks can occur
due to vulnerable network configurations or  due to
transmitting insecure data.
These problems have received significant
attention from researchers and practitioners, and a large number of techniques have been introduced to
address such problems. In the other side, traditional network security schemes depend
mainly on developing cryptographic protocols or on using biometric
methods. Essentially, cryptographic protocols are
considered unbreakable based on difficulty of solving  mathematical problems such
as factoring large integers~\cite{stallings06,schneier96}.

Network coding is  a powerful tool that has been recently used to increase
the throughput, capacity, and performance of communication networks.
Information theory aspects of network coding have been investigated
in~\cite{soljanin07,fragouli06,ahlswede00} and~\cite{yeung06}. It certainly can offer benefits in terms of
energy efficiency, additional security, and  delay minimization.  Network coding is used to detect adversaries~\cite{ho04} and to  protect packets against network attackers and injectors~\cite{gkantsidis06,jaggi07,cai06}. Network coding can be also used to enhance
security and protection~\cite{jaggi07,lima06}.

 In this paper, we propose an
approach for light-weight network security that is based on network coding. We develop  a scheme called \textbf{SNEED}, Security of   Networks  Employing Encoding and Decoding, in order to protect
transmitted data between  sets of senders and receivers. For one
path that has been attacked (eavesdropped) between a sender and receiver,
 one backup path is provided, in which it will carry encoded data from
sources to  receivers.

\begin{figure}[t]
\begin{center}
  % Requires \usepackage{graphicx}
  \includegraphics[scale=0.6]{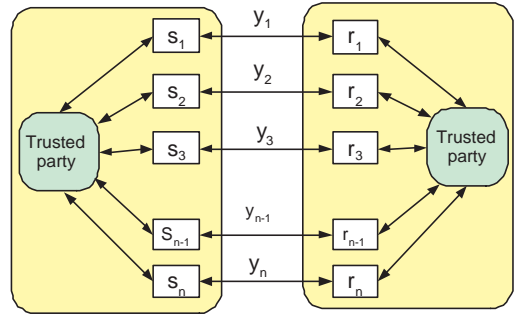}
  \caption{$n$ links are shared between   sets of senders and receivers. One link is used for data integrity and message authentication. The trusted party can be a router (or switch) to send and receive messages.}
  \label{fig:nsenders}
  \end{center}
\end{figure}
%%%%%%%%%%%%%%%%%%%%%%%%%%%%%%%%%%%%%%%%%%%%%%%%%%%%%%%%
\section{Network Model and Assumptions}\label{sec:model}
In this section, we describe the network  model and basic assumptions.  The node can be a router, switch, or an end terminal depending
    on the network model  and  transmission layer.

\subsection{\bf  Network Model}
\begin{compactenum}[i)]
\item Let $\N$ be a network represented by an abstract graph
    $G=(\textbf{V},E)$, where $\textbf{V}$ is a set of nodes and $E$
    is  a set of undirected edges. Let $S =\{s_1,s_2,\ldots \}$ and $R =\{r_1,r_2,\ldots \}$ be  sets of independent sources
    and destinations, respectively. The set $\textbf{V}=V\cup S \cup
    R$ contains the relay nodes, sources, and destinations as shown in
    Fig.~\ref{fig:nsenders}. Assume for simplicity that $|S|=|R|=n$, and  hence the number of sources is equal to the number of receivers. Assume also that the senders (receivers) are connected by a super sender (receiver).

\item A connection path $L_i$ is a set of edges connected together with a
    starting node (sender) and an ending node (receiver).  The  paths $L=\{L_1,L_2,\ldots\}$ carry  data
    from the sources to the receivers. Connection paths are link
    disjoint and provisioned in the network between  senders and
    receivers. All connections have the same bandwidth, or otherwise a
    connection with a high bandwidth can be divided into multiple
    connections, each of which has the unit capacity.

\item Every sender $s_i$ will send a message $m_i^\ell$ to the receiver $r_i$
    at  time $t^\ell_\delta$ in the round $\ell$ in the cycle $\delta$, for all $1 \leq i \leq n$. A receiver $r_i$ receiving a message $m_i^\ell$ is able to detect whether
    the message has been altered by using any authentication or signaling protocols. Further details regarding this model, definition of the working and backup paths, and the normalized capacity can be found in~\cite{aly08i} and~\cite{aly10b}.
    \item $Enc_k$ and $Dec_k$ represent the encryption and decryption algorithms with a shared symmetric key $k$, respectively.
\end{compactenum}

\medskip

\subsection{\bf Senders and Receivers Packets}
Every sender $s_i$ prepares a packet \emph{$packet_{s_i \rightarrow r_i}$}
sent to the receiver $r_i$. The packet contains the sender's $ID_{s_i}$, data $m_{i}^\ell$, and
 time for every round and cycle $t^\ell_\delta$. There are four types of packets that carry the data:

\begin{compactenum}[i)]
\item {\bf Plain Packets.} Packets sent without network coding or encryption, in which the sender does not require to
    perform any coding or encrypting operations.
For example, in case of packets sent without coding, the sender $s_i$
sends the following packet to the receiver $r_i$:
\begin{eqnarray}
packet_{s_i \rightarrow r_i}:=(ID_{s_i},m_i^\ell,t^\ell_\delta)
\end{eqnarray}
\item {\bf Encoded Packets.}  Packets sent with encoded data without encryption,
    in which the sender requires to perform other sender's data. For example, in case of packets sent with encoded data,  the sender $s_i$ sends the
following packet to receiver $r_i$:
\begin{eqnarray}
packet_{s_i \rightarrow r_i}:=(ID_{s_i},\sum_{s_j \in \mathcal{S}} m_j^{\ell},t^\ell_\delta),
\end{eqnarray}
where $\mathcal{S}$ is the set of sources sending plain messages.
\item {\bf Encrypted Packets.}
Assume there is a shared symmetric key between a sender $s_i$ and receiver $r_i$.
In this case the sender $s_i$ will send the packet $packet_{s_i \rightarrow  r_i}$ as follows:
\begin{eqnarray}
packet_{s_i \rightarrow r_i}:=(ID_{s_i}, Enc_{k_i} (m_i^{\ell}),t^\ell_\delta).
\end{eqnarray}
This packet carries an encrypted message without encoding.

\item {\bf Encoded and Encrypted Packets.}  Packets sent with encoded data with encryption,
    in which the sender needs to protect other senders' data. For example, in case of packets sent with encoded encrypted data, the sender $s_i$ sends the
following packet to receiver $r_i$:

\begin{eqnarray}
packet_{s_i \rightarrow r_i}:=(ID_{s_i},\sum_{s_j \in \mathcal{S}} x_j^{\ell},t^\ell_\delta).
\end{eqnarray}
The value $y_i=\sum_{j=1,j\neq i}^n x_j^{\ell}$, where $x_j^\ell=Enc_{k_j} (m_j^\ell)$, is computed by every
sender $s_i$, in which it is able to collect the data from all other
senders and encode them by using the XORed operations.

\end{compactenum}

\medskip

\subsection{\bf   Attackers Model}
We  represent  an attacker model as follows. There are two types of attacks in which the network security services must overcome:  Active (intruders) and passive  (eavesdroppers) attackers~\cite{stinson95}.  We also assume that  $t$ different attackers have access to $t$ channels among all $n$ channels   $L_1,L_2,\ldots,L_n$ at a certain round time, for $t\geq 1$. This is similar to an attacker accessing $t$ channels. Multiple attackers which attack the same channel are represented by one attacker.
\begin{compactenum}[i)]

\item The passive attacker is able to eavesdrop  on the transmission between the senders and receivers. A passive attacker such as an eavesdropper should not learn any information even
if it can have a copy of it. 
\item The active attacker  can modify or fabricate messages throughout a cycle.
    This occurs by injecting new data (coefficients) at the relay nodes of  data sent by the sources. This can occur also over the shared links.  In addition, an active attacker will not be
able to change or fabricate information, and affect the system resources due to the network security
strategies.
\end{compactenum}

This attacker model is similar to the attacker model described for wiretapping channels as stated by many authors~\cite{yeung06,elrouayheb07}.

\section{Data Security Against A Single Attacked Path By Using Shared Keys And Network Coding}\label{sec:SAP}
In this section, we consider the case of a single \emph{active} attacker, i.e., $t=1$.
We will assume that there are shared symmetric keys between the senders and receivers. Also, the receivers are able to detect
messages that have been modified by the attackers using hashing functions like MD5 or SHA-1; fabricated messages
can be detected by using sequence numbers between the senders and receivers~\cite{stallings06,schneier96}.

Let $k_i$ be a shared symmetric key between  $s_i$ and  $r_i$. This key can be distributed by using a Trusted Third Parity (TTP). In this case, the senders exist in a secure domain as well as the receivers.  Let $x_i$ be the encrypted message from the sender $s_i$ to the receiver $r_i$ by using the shared key $k_i$. Thus,
\begin{eqnarray}
x_i^\ell=Enc_{k_i} (m_i^\ell)
\end{eqnarray}

\medskip

\noindent {\bf A. Encoding Operations:} The encoding operations are done as follows. At every round time, $n-1$ senders will send their own data with full capacity over $n-1$ paths that are established from the sources to the destinations. Also, these $n-1$ sources will exchange their data with exactly one  source node $s_i$ that will send the Xored encoded data over a shared link $L_i$.   This process is
explained in Eq.~(\ref{eq:n-1security}) in Table 1.; and we call it \textbf{(SNEED)} against a single attacked path (SAP). The data
is sent in rounds for every cycle. Also, we assume that the attacker can affect only  one path throughout a cycle, but different paths might suffer
from different   \emph{active} attackers throughout different cycles.
\begin{table}\label{table:scheme1}
\caption{The encoding operations of \textbf{SNEED}}
\begin{eqnarray}\label{eq:n-1security}
\begin{array}{|c|ccccccc|c|c|}
\hline
& \multicolumn{7}{|c|}{\mbox{ round time cycle 1 }}&\ldots&\ldots    \\
\hline
&1&2&3&\ldots&j&\ldots&n&\ldots&\ldots   \\
\hline    \hline
  s_1 \rightarrow r_1\!\! &\!\! y_1&x_1^1 &x_1^2&\!\!\ldots\!\!&x_1^j &\!\!\ldots \!\! &\!\! x_1^{n-1}&\ldots&\ldots   \\
    s_2 \rightarrow r_2\!\! &\!\!  x_2^1& y_2& x_2^2&\!\!\ldots\!\!&x_2^j&\!\!\ldots\!\!&\!\! x_2^{n-1}&\ldots&\ldots    \\
s_3 \rightarrow r_3 \!\!& \!\! x_3^1& x_3^2&y_3&\ldots& x_3^j&\!\!\ldots&\!\! x_3^{n-1}&\ldots&\ldots     \\
     \vdots&\vdots&\vdots&\vdots&\vdots&\vdots&\vdots&\vdots&\ldots&\ldots   \\
     s_j \rightarrow r_j\!\! &\!\! x_j^1&x_j^2&x_j^3& \!\!\ldots \!\!&\!\!y_j\!\!&\!\!\ldots\!\!&\!\!x_j^{n-1}&\ldots&\ldots   \\
  \!\!\!\vdots\ddots& \!\!\!\vdots\ddots& \!\!\!\vdots\ddots&\vdots\ddots& \!\!\!\vdots\ddots&\vdots\ddots& \!\!\!\vdots \ddots& \!\!\!\vdots\ddots& \!\!\!\ldots& \!\!\!\ldots   \\
 % \!\! s_{n-1} \rightarrow \!\! r_{n-1} \!\! & \!\! x_{n-1}^1&\!\! x_{n-1}^2&\!\! x_{n-1}^3&\!\! \ldots\!\!&\!\! \!\!x_{n-1}^{j-1}\!\!&\!\!\! \ldots \!\!\!&\!\!\!  x_{n-1}^{n-1}\!\!\!\! \\
   s_n \rightarrow r_n \!\!&\!\! x_n^1&\!\! x_n^2&\!\! x_n^3&\!\! \ldots\!\!&\!\! x_n^{j-1}&\!\! \ldots\!\!&\!\!\!y_{n}\!\!&\ldots&\ldots   \\
\hline
\hline
\end{array}
\end{eqnarray}
\end{table}
In this case, throughout of one cycle consists of $n$ rounds,  $y_j$'s  for $1\leq j\leq n$ are
defined over $\F_2$ as
\begin{eqnarray} y_j=\sum_{i=1}^{j-1} x_i^{j-1} \oplus \sum_{i=j+1}^n x_i^{j}.
\end{eqnarray}
The senders send packets to the set of receivers in rounds. Every packet
initiated from the sender $s_i$ contains $ID_{s_i}$, data $x_{s_i}^\ell$, and a round
$t_\delta^\ell$. For example, the sender $s_i$ will send the encrypted
$packet_{s_i\longrightarrow r_i}$ as follows.
\begin{eqnarray}
packet_{s_i \longrightarrow r_i}=(ID_{s_i},x_{s_i}^\ell,t_{\delta}^\ell).
\end{eqnarray}
Also, the sender $s_j$ will send the encoded encrypted data $y_{s_j}$ as
\begin{eqnarray}
packet_{s_j \longrightarrow\longrightarrow r_j}=(ID_{s_j},y_{s_j},t_{\delta}^\ell).
\end{eqnarray}
 We ensure that the encoded data $y_{s_j}$ is varied per one round  transmission for every cycle. This means that the path $L_j$ is dedicated
to send only one encoded data $y_j$ and all data
$x_j^1,x_j^2,\ldots,x_j^{n-1}$.

The data transmitted from the sources do not experience any
round time delay. This means that the receivers will be able to decrypt the
received packets online and immediately recover the attacked data. In Eq.~(\ref{eq:n-1security}), the \emph{active} attacker can break only one message per one attacked working path.  A generalization of this scheme is presented in Section~\ref{sec:higherfields} where the \emph{active} attacker(s) has access to $t$ multiple channels simultaneously.

\begin{lemma}
The normalized network capacity according to Eq.~(\ref{eq:n-1security}) is $(n-1)/n$.
\end{lemma}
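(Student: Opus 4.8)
The plan is to unwind the definition of normalized network capacity from~\cite{aly08i,aly10b} and reduce the claim to a direct counting argument over the transmission schedule in Eq.~(\ref{eq:n-1security}). Recall that the normalized capacity measures the fraction of the total provisioned link capacity that is spent carrying genuine source data, as opposed to redundant (protection) symbols. Since there are $n$ link-disjoint paths $L_1,\ldots,L_n$, each of unit capacity, and since one cycle consists of $n$ rounds, the total number of transmission slots available per cycle is $n\times n=n^2$. The task is therefore to count how many of these $n^2$ slots carry useful data.

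First I would classify the entries of the table in Eq.~(\ref{eq:n-1security}). On path $L_j$ exactly one slot of the cycle is devoted to the encoded symbol $y_j$ (occurring in round $j$, on the diagonal), while the remaining $n-1$ slots carry the genuine encrypted messages $x_j^1,\ldots,x_j^{n-1}$. Hence each path contributes exactly one redundant slot and $n-1$ useful slots per cycle. A key observation to record here is that each $y_j$ is merely a linear (XOR) combination of the other senders' symbols $x_i$ with $i\neq j$; it conveys no message of its own and serves only to recover an attacked path, so the $y_j$'s are correctly charged to overhead rather than counted as information-bearing transmissions.

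Summing over all $n$ paths, the number of useful slots per cycle is $n(n-1)$ and the number of redundant slots is $n$, so the useful fraction is $n(n-1)/n^2=(n-1)/n$, as claimed. Equivalently — and this is the cleaner way to present it — inspecting any single round shows that exactly one path transmits an encoded $y_j$ while the other $n-1$ paths transmit distinct data symbols, giving the same ratio $(n-1)/n$ round by round; averaging over the cycle is then unnecessary. Either viewpoint yields the stated normalized capacity.

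I expect no serious mathematical obstacle, since the statement is essentially a bookkeeping result once the definition is pinned down. The only point requiring genuine care is the accounting of the encoded symbols: one must justify that each $y_j$ is pure redundancy (not an independent message) so that it belongs to overhead, and confirm that exactly one such symbol is scheduled per round \emph{and} per path, with no round carrying more than one $y_j$. Reading this scheduling pattern off the diagonal placement of the $y_j$'s in Eq.~(\ref{eq:n-1security}) completes the argument.
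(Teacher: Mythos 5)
Your proposal is correct and, in its per-round form, coincides with the paper's own proof: the paper likewise observes that in every round exactly one channel carries the encoded packet $y_j$ while the remaining $n-1$ channels carry plain (information-bearing) packets, giving the ratio $(n-1)/n$. Your additional per-cycle count of $n(n-1)$ useful slots out of $n^2$ is just the same argument aggregated over the cycle, so no substantive difference exists.
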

\begin{proof}
The proof comes from the fact that only one encoded packet is sent over one channel throughout every round per cycle. Therefore, there are $(n-1)$ plain packets sent over $n$ channels.
\end{proof}
\medskip

\noindent {\bf B. Decoding and Data Integrity Operations:} The decoding operations are done as follows. Every receiver $r_i$ will
receive a message $x_i^\ell$ over the link $L_i$. Once the attack occurs at a
link $L_i$, then the receiver that receives $y_\ell$ over the path
$L_\ell$ will be used for data integrity and recovery.

In case the attacker modifies the message, the receiver $r_i$ will know about
the modified message by using MD5 hashing function, so $r_i$ will ask other
receivers to send their messages to recover the modified message.
\begin{eqnarray}
x_i^\ell=y_j\oplus \sum_{h=1,h \neq i}^n x_h^{j-1} \oplus \sum_{h=j+1, h \neq i}^n x_h^j.
\end{eqnarray}
The receiver $r_i$ will decrypt the message $x_i^\ell$ by using its symmetric key $k_i$, i.e., $m_i^\ell=dec_{k_i} (x_i^\ell)$.

\bigskip

\section{\textbf{SNEED} without Sharing symmetric Keys}\label{sec:netsecuritycoding}
In this  section, we propose to use network coding to secure the traffic between the senders and receivers against \emph{active} attackers, where no symmetric keys are shared.  Assume a network with $n$ connections shared between $n$ senders and  receivers.
We will assume that every sender will be able to combine packets from other receivers to hide its own data.
For example, the sender $s_i$ will send the encoded message $y_i$ to the receiver $r_i$ over the link $L_i$.

We will design a security scheme by using network coding against an entity which can not only copy or listen to the message, but also can fabricate new messages or modify the current ones. In this  model we do not assume  pre-shared secret keys between the
senders and receivers. Also, the message  is still secured against attack's fabrication and modification.

\medskip

\begin{figure}[t]
\begin{center}
  % Requires \usepackage{graphicx}
  \includegraphics[scale=0.65]{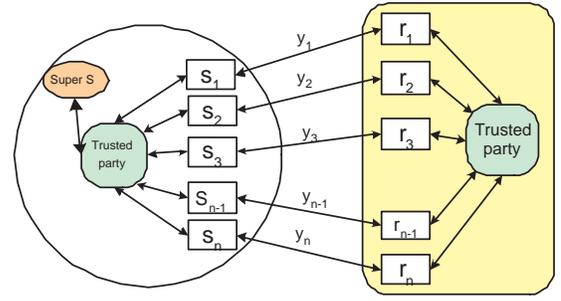}
  \caption{$n$ shared links between  a set of senders and a set of receivers. One link is used for data integrity and authentication. A path is a set of links connecting a set of nodes (routers or switches). A software management to solve a commodity problem can be used to provision a set of link disjoint paths between two core nodes in a given network topology.}
  \label{fig:nsenders2}
  \end{center}
\end{figure}
\subsection{Network Security Codes and  {\bf SNEED}}
We will define  network security codes for {\bf SNEED}  and study their properties. We assume there is a super sender $S$ that sends $n$ different messages over disjoint paths to $n$ receivers as shown in Fig.~\ref{fig:nsenders2}. Furthermore, the receivers can communicate with each other by using a trusted party. The goal is to provide collaborative  security for the $n$ messages against eavesdroppers, intruders, and malicious attackers. We will develop \textbf{SNEED} over the binary field, hence the fastest encoding and decoding operations are used.

One useful application of {\bf SNEED} is the case of sending multimedia and TV streams over a public network as in the Internet. Such streams must be processed online with fast encoding and decoding operations, in addition to the security operations. Let $d$ be the minimum distance defined as in the notion of error correcting codes~\cite{huffman03,macwilliams77,Ayanoglu93}.

\begin{definition}[{\bf SNEED}]\label{def:nsc}
An $[n,k,d]_2$ network security code is a $k$-dimensional subspace of the space $\F_2^n$ that secures $k$ information symbols (messages) by mapping them into $n$ mixed symbols, and can recover from upto $d-1$ compromised (attacked) channels. Furthermore, the code is generated by a nonsystematic matrix $G$ of size $k \times n$ defined over $\F_2$.
\end{definition}

\begin{eqnarray}
G=\begin{pmatrix}
  g_{11} & g_{12} & \ldots & g_{1n} \\
  g_{21} & g_{21} & \ldots & g_{2n} \\
  \vdots & \vdots & \vdots &  \vdots \\
  g_{k1} & g_{k2} & \ldots & g_{kn} \\
\end{pmatrix}_{k \times n}
\end{eqnarray}

We will use the nonsystematic classical binary error correcting codes in the construction of {\bf SNEED}~\cite{huffman03,macwilliams77}.
The encoding scheme of such codes is given by

\medskip

\begin{eqnarray}
\begin{array}{c||cccccccc}
&L_1&L_2&\cdots&L_{n}\\
 \hline \hline
s_1\!\!&g_{11} m_{1}&g_{12} m_{1}&\!\!\ldots&\!\!g_{1n} m_{1}\\
s_2\!\!&g_{21} m_{2}&g_{22} m_{2}&\!\!\ldots&\!\!g_{2n} m_{2}\\
&\vdots&\!\!\vdots&\cdots&\vdots&\\
s_{k}\!\!&g_{k1} m_{k}&g_{k2} m_{k}&\!\!\ldots&\!\!g_{kn} m_{k}\\
\hline \hline \!\!&y_1& y_2&\!\!\ldots&\!\!y_n\\
\end{array}
\end{eqnarray}
The encoding message $y_j$, for $1\leq j \leq n$ ,  is defined by
\begin{eqnarray}
y_j=\sum_{i=1} ^k g_{ij}m_i
\end{eqnarray}
\medskip

\begin{table}[t]
\caption{Best known \textbf{SNEED} codes over $\F_2$~\cite{huffman03}}
\label{table:bchtable}
\begin{center}
\begin{tabular}{|l|l|l|l|}
\hline    n& m &code&type  \\
 \hline
 &&&\\
7&3&$[7,4,3]_2$&Hamming code\\
10&4&$[10,6,3]_2$&Linear code\\
15&4&$[15, 11, 3]_2$& Hamming code\\
19&7&$[19,12,3]_2$&Extension construction\\
23&8&$[23,15,3]_2$& Extension construction \\
25&5&$[25,20,3]_2$&Linear code\\
31&5&$[31, 26, 3]_2$& Hamming code\\
39& 8&$[39, 31,3]_2$&Extension construction \\
47&9&$[ 47,38 ,3]_2$&Extension construction\\
63&6&$[63, 57,3]_2$&Hamming code\\
71& 8&$[71,63,3]_2$&Matrix construction\\
79&9&$[79,70,3]_2$&Extension construction\\
95&10&$[95,85,3]_2$&Extension construction\\
127&7&$[127,120,3]_2$&Hamming code\\
\hline
\end{tabular}
\end{center}
\end{table}

\begin{lemma}
The normalized capacity of the network utilizing {\bf SNEED} is given by
$k/n$.
\end{lemma}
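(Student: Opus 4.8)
The plan is to mirror the argument used in the earlier capacity lemma for Eq.~(\ref{eq:n-1security}), since the notion of normalized capacity employed there is simply the ratio of information symbols conveyed to the number of channels consumed in one round of transmission. First I would recall that, by Definition~\ref{def:nsc}, a \textbf{SNEED} code is specified by a $k \times n$ generator matrix $G$ over $\F_2$ that takes the $k$ information messages $m_1, m_2, \ldots, m_k$ and produces the $n$ mixed symbols $y_1, y_2, \ldots, y_n$ via $y_j = \sum_{i=1}^k g_{ij} m_i$.

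Next, I would read off from the encoding table immediately preceding this lemma that each of the $n$ columns corresponds to a distinct link $L_j$, and that the symbol $y_j$ transmitted on $L_j$ occupies exactly one channel use. Hence a single round of the scheme consumes $n$ channel uses in total while delivering the $k$ independent information messages assembled in the vector $(m_1, m_2, \ldots, m_k)$. Applying the same definition of normalized capacity as (useful information symbols) divided by (total channels used) that yielded $(n-1)/n$ in the earlier lemma, I would then conclude directly that the normalized capacity equals $k/n$.

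The computation itself is immediate; the only point requiring care---and the step I expect to be the main (though minor) obstacle---is making explicit that the $k$ rows of $G$ correspond to exactly $k$ independent information symbols, so that the numerator is $k$ rather than, say, the rank of a submatrix or a coset count. This is guaranteed because $G$ generates a $k$-dimensional subspace of $\F_2^n$, so the message-to-codeword map is injective and the $k$ inputs are genuinely independent. Once this correspondence between the code parameters $(k,n)$ and the pair (information symbols, channel uses) is pinned down, the result follows by the identical reasoning as in the preceding lemma, with $k$ playing the role previously played by $n-1$.
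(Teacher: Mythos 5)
Your proposal is correct and follows essentially the same reasoning as the paper: the paper's proof likewise counts the $k$ information-carrying (working) paths against the $n-k$ redundant ones and divides by the total of $n$ channels. Your extra remark about the injectivity of the message-to-codeword map is a harmless refinement of the same counting argument, not a different route.
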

\begin{proof}
By the definition of the network security code, there are $n-k$ redundant symbols that are used to recover from up to $d-1$ attached channels. Therefore there are $k$ working paths that will carry $k$ source data. The result is a consequence by dividing by the total number of channels $n$.
\end{proof}

Table~\ref{table:bchtable} presents the best known \textbf{SNEED} for certain number of channels defined over $\F_2$.

We have $n-k$ lockers' channels, in which they carry redundant data,  in this model. From the proposed code construction, we ensure that $t=d-1\leq n-k$, where $t$ is the number of compromised (attacked) channels. This is actually a direct consequence of the Singleton bound~\cite{huffman03,macwilliams77}.

\subsection{Decoding Operations of  \textbf{SNEED}}
The decoding operations at the receivers side are guaranteed once a system of $t$ linearly independent equations is established in $t$ unknown variables. Let $t$ attackers can access $t$ disjoint channels and alter the transmitted messages. We assumed that the receivers are located in a trusted domain, therefore they can trust and exchange protected messages with  each others.

The system can be solved by using, for example, Gauss elimination method. By definition of \textbf{SNEED}, the matrix $\textbf{G}$ has dimension of $k$. Furthermore, the receivers will know the number and position of the channels that have been attacked. In this case the decoding operations are achieved by using the well known decoding methods for erasure channels, see~\cite{huffman03}.

The following example illustrates the proposed model.
\begin{example}
Assume we have a  connection $L_i$ between a sender  $s_i$ and a receiver  $r_i$ for $i=1,2,3$ and $4$. Furthermore, the channel $L_4$ is used as a lock (redundant) path.
Without loss of generality, we can assume that  the four senders  send
\begin{eqnarray}
\begin{array}{l}
y_1=m_1 \oplus m_2 \\
y_2=m_2 \oplus m_3 \\
y_3=m_1 \oplus m_3 \\
y_4=m_1 \oplus m_2\oplus m_3
\end{array}
\end{eqnarray}
\end{example}
We also assume that the attacker affects only one of channels $L_1,L_2$ or $L_3$.
In this example the \textbf{SNEED} can be stated as follows. There are three working paths and one lock path. The security scheme is given by
\begin{eqnarray}
\begin{pmatrix}
  1 & 0 & 1 & 1 \\
  1 & 1 & 0 & 1 \\
  0 & 1 & 1 & 1 \\
\end{pmatrix} ~~~~~~~ \begin{array}{c||cccc}
&L_1&L_2&L_3 &L_{4}\\
 \hline \hline
s_1\!\!&m_1&0&m_1&m_1\\
s_2\!\!&m_2&m_2&0&m_{2}\\
s_{3}\!\!&0&m_3&m_3&m_3\\
\hline
T&y_1&y_2&y_3&y_4\\
\end{array}
\end{eqnarray}
If the channel $L_2$  is compromised, then the decoding can be done by using Gauss elimination method over the channels $L_1, L_3$ and $L_4$. Adding $y_1$, $y_3$ and $y_4$ will give $m_1$, then substituting in $y_1$ and $y_3$ will give $m_2$ and $m_3$.

\bigskip

\section{\textbf{SNEED} over Higher Finite Fields}\label{sec:higherfields}
In this section, we study security of networks employing encoding and decoding, \textbf{SNEED},  against multiple link attacks.
We   propose \textbf{SNEED} over a finite field with $q$ elements to achieved this goal. In this scheme is an extension of the scheme presented in Section~\ref{sec:SAP}, where the encoding and decoding operations are defined over the binary field. Assume $t$ be the number of compromised channels. One can design a matrix $\G$ over $\F_q$ such that $k=n-t$ paths will carry secure data.
Let $a$ be a primitive element in $\F_q$. The matrix $\G$ is defined by
\begin{eqnarray}
\G=\begin{pmatrix}
  1 & a & a^2 & \ldots & a^{n-1} \\
  1&a^2 & a^4 & \ldots & a^{n-2} \\
  \vdots & \vdots & \vdots &  \vdots&\vdots \\
  1 & a^k & a^{2k}&\ldots & a^{kn} \\
\end{pmatrix}
\end{eqnarray}

The matrix $G$ has rank $k=n-t$. Clearly a finite field with $q> n-t+1$ is sufficient for the encoding and decoding operations. The encoding operations are done by using the following encoding scheme.

\medskip

\begin{eqnarray}
\begin{array}{c||cccccccc}
&L_1&L_2&L_3&\cdots&L_{n}\\
 \hline \hline
s_1\!\!& x_{1}&a x_{1}&a^2 x_1&\!\!\ldots&\!\!a^{1n} x_{1}\\
s_2\!\!& x_{2}&a^2 x_{2}&a^4 x_2&\!\!\ldots&\!\!a^{2n} x_{2}\\
&\vdots&\!\!\vdots&\vdots&\cdots&\vdots&\\
s_{k}\!\!& x_{k}&a^{k} x_{k}&a^{2k}x_k&\!\!\ldots&\!\!a^{kn} x_{k}\\
\hline \hline T\!\!&y_1& y_2&y_3&\!\!\ldots&\!\!y_n\\
\end{array}
\end{eqnarray}

By this construction, if there are up to $t$ attacked paths, then the system of $n-t \times n-t$ equations is solvable. This is due to the fact that the remaining matrix can be reduced to the Vandermond matrix~\cite[Chapter 4]{huffman03}.

\bigskip

\section{Conclusion}\label{sec:conclusion}
Network coding as a promising tool offers benefits for enhancement and supplement of  network security services.
In this paper, we presented schemes for enhancing network security using network coding and joint capacities.  We demonstrated the encoding and decoding operations of the proposed \textbf{SNEED} and showed that it can be deployed over a network with $n$ senders and $n$ receivers. Furthermore,  \textbf{SNEED} is robust against active and passive network attacks. Our future work will include practical aspects of the proposed schemes.

\bigskip

\scriptsize
\bibliographystyle{plain}

\end{document}